\newtheorem{lemma}{Lemma}
\newtheorem{theorem}{Theorem}
\title{Controlling Smart Propagation Environments: Long-Term versus Short-Term Phase Shift Optimization \vspace*{-0.4cm}}
\name{\begin{tabular}{c}Trinh Van Chien$^\ast$, Lam Thanh Tu$^\xi$, Dinh-Hieu Tran$^\ast$, Hieu Van Nguyen$^\ast$,
		\\  Symeon~Chatzinotas$^\ast$, Marco~Di~Renzo$^{\dagger}$, and Bj\"{o}rn Ottersten$^\ast$ \end{tabular} \vspace*{-0.5cm} \thanks{This   work   was partially supported by RISOTTI-Reconfigurable  Intelligent  Surface  for  Smart  Cities  under  ProjectFNR/C20/IS/14773976/RISOTTI.}}
\address{\vspace*{-0.05cm}$^{\ast}$Interdisciplinary Centre for Security, Reliability and Trust, University of Luxembourg, Luxembourg\\
$^{\xi}$  Institute XLIM, University of Poitiers, France \\
$^{\dagger}$Universit\'e  Paris-Saclay,  CNRS,  CentraleSup\'elec, Laboratoire  des  Signaux  et  Syst\`emes,  France
\vspace*{-0.6cm}
}
\begin{document}
%
\maketitle
\begin{abstract}
 Reconfigurable intelligent surfaces (RISs) have recently gained significant interest as an emerging technology for future wireless networks. This paper studies an RIS-assisted  propagation environment, where a single-antenna source transmits data to a single-antenna destination in the presence of a weak direct link. We analyze and compare RIS designs based on long-term and short-term channel statistics in terms of coverage probability and ergodic rate. For the considered optimization designs, closed-form expressions for the coverage probability and ergodic rate are derived. We use numerical simulations to analyze  and compare against analytic results in finite samples. Also, we show that the considered optimal phase shift designs outperform several heuristic benchmarks.
\end{abstract}

\vspace{-0.15cm}
\begin{keywords}
Reconfigurable intelligent surface, coverage probability, ergodic channel rate
\end{keywords}
\vspace*{-0.5cm}
\section{Introduction}
\label{sec:intro}
\vspace*{-0.4cm}

Data throughput and coverage enhancements are of paramount importance in fifth-generation (5G) and beyond networks \cite{giordani2020toward}. In this context, reconfigurable intelligent surfaces (RISs) have received significant attention from academic and industrial researchers because of their ability to control the wireless propagation environment through passive reflecting elements integrated with low-cost electronics \cite{qian2020beamforming, wu2019intelligent, abrardo2020intelligent}. 

The complex nature of wireless environments results in propagation channels that are characterized by small-scale and large-scale fading. RISs aim at shaping the electromagnetic waves in complex wireless environments by appropriately optimizing the phases of their constitutive elements (i.e., the unit cells). Due to the small-scale and large-scale dynamics that characterize a complex wireless channel, the phase shifts of the RIS elements can be optimized based on different time scales \cite{Chien2021TWC, zhi2021two}. Most of the works in the literature have considered the optimization of the phase shifts by assuming the perfect knowledge of the instantaneous channel state information, and different performance metrics have been considered \cite{wu2019intelligent, zhang2020sum, zappone2020overhead}. This optimization criterion is based on adjusting the phase shifts of the RIS elements based on the small-scale dynamics of the channel, and, therefore, results in the best achievable performance. This optimization criterion may, however, not be applicable in some application scenarios that are characterized by a short coherent time,  since the optimal phase shifts of the RIS elements need be updated frequently in order to adapt to the rapid changes that characterize the small-scale fading \cite{jung2020performance}. 
Another option for optimizing the phase shifts of the RIS elements is based on leveraging only statistical channel state information (CSI), i.e., the large-scale characteristics of the wireless channel \cite{abrardo2020intelligent, 9140329}. 
Optimization criteria based on long-term CSI need to be updated less frequently, and this reduces the channel estimation overhead. However, some performance degradation is expected as compared with the optimal phase shift design based on instantaneous CSI.
Even though some research works have recently proposed the design of RISs based on long-term CSI, to the best of our knowledge, no previous work has comprehensively and analytically analyzed as well as compared the achievable performance of RIS-assisted wireless networks based on short-term and long-term CSI. 

Based on these considerations, the aim of this paper is to study the performance of RIS-assisted systems by considering optimization criteria based on long-term channel statistics and short-term channel statistics, i.e., based on prefect CSI.
We analyze the coverage probability and the ergodic channel capacity under long-term and short-term phase shift design criteria.
Specifically, we derive closed-form expressions of the coverage probability and the  ergodic channel rate for both optimization criteria, highlighting findings that have not been investigated before. Several insights are also observed, such as that the long-term phase shift design offers similar performance as the short-term optimal phase shift design as the number of RIS elements increases.

\textit{Notation}: Upper-bold and lower-bold letters are used to denote matrices  and vectors, respectively. The identity matrix of size $M \times M$ is denoted by $\mathbf{I}_M$. The Hermitian and regular transpose are denoted $(\cdot)^H$ and $(\cdot)^T$, respectively. $\mathcal{CN}(\cdot, \cdot)$ denotes a circularly symmetric Gaussian distribution. The expectation of a random variable is denoted by $\mathbb{E}\{ \cdot \}$. The upper incomplete Gamma function is denoted by $\Gamma(m,n) = \int_{n}^{\infty} t^{m-1} \mathrm{exp}(-t) dt$ and $\Gamma(x) = \int_{0}^{\infty} t^{x-1}  \mathrm{exp}(-t) dt$ denotes the Gamma function.
\vspace*{-0.5cm}
\section{System Model}
\label{sec:SysModel}
\vspace*{-0.5cm}
We consider an RIS-assisted communication system  where a single-antenna source communicates with a  single-antenna destination. A frequency-flat block-fading channel model is assumed in each coherence interval. 
The RIS comprises $M$ reflecting elements. The RIS phase shift matrix $\pmb{\Phi} \in \mathbb{C}^{M \times M}$ is defined as $\pmb{\Phi} = \mathrm{diag}\big([e^{j\theta_{1}}, \ldots, e^{j\theta_{M}}]^T \big)$, where $\theta_{m} \in [-\pi, \pi]$ is the phase shift of the $m$-th reflecting element.
\vspace*{-0.2cm}
\subsection{Channel Model}
\vspace*{-0.2cm}
The channel of the direct link between the source and the destination is $h_{\mathrm{sd}} \in \mathbb{C}$. 
The indirect link from the source to the destination comprises the channel between the source and the RIS, which is denoted by $\mathbf{h}_{\mathrm{sr}} \in \mathbb{C}^M$, and the channel between the RIS and the destination, which is denoted by $\mathbf{h}_{\mathrm{rd}} \in \mathbb{C}^M$. 
Specifically, the channels are defined as
	$h_{\mathrm{sd}} = \sqrt{\beta_{\mathrm{sd}}} g_{\mathrm{sd}}$, $\mathbf{h}_{\mathrm{sr}} = \bar{\mathbf{h}}_{\mathrm{sr}} + \mathbf{g}_{\mathrm{sr}}$, and $\mathbf{h}_{\mathrm{rd}} = \bar{\mathbf{h}}_{\mathrm{rd}} + \mathbf{g}_{\mathrm{rd}}$,
	where $g_{\mathrm{sd}} \sim \mathcal{CN}(0,1)$, $g_{\mathrm{sr}}  \sim \mathcal{CN}(0, \mathbf{I}_M \beta_{\mathrm{sr}} /(K_{\mathrm{sr}}+1))$, and $g_{\mathrm{rd}}  \sim \mathcal{CN}(0, \mathbf{I}_M \beta_{\mathrm{rd}} /(K_{\mathrm{rd}}+1))$ are the small-scale fading contributions; $\beta_{\mathrm{sd}},$ $\beta_{\mathrm{sr}},$ and $\beta_{\mathrm{rd}}$ are the large-scale fading coefficients; and $K_{\mathrm{sr}} \geq 0 $ and $K_{\mathrm{rd}} \geq 0$ are the Rician factors. Based on \cite{massivemimobook}, the line-of-sight (LoS) channel vectors $\bar{\mathbf{h}}_\alpha \in \mathbb{C}^{M} , \alpha \in \{\mathrm{sr}, \mathrm{rd} \},$ are given as follows
	\vspace*{-0.2cm}
	\begin{equation} \label{eq:barhsr}
		\bar{\mathbf{h}}_\alpha = \sqrt{\frac{K_\alpha \beta_\alpha }{K_{\alpha}+1}} \left[e^{j \mathbf{k}(\psi_{\alpha}, \phi_{\alpha} )^T \mathbf{u}_1}, \ldots,  e^{j \mathbf{k}(\psi_{\alpha}, \phi_{\alpha} )^T \mathbf{u}_M} \right]^T,
	\vspace*{-0.1cm}
	\end{equation}
where $\psi_{\alpha}$ and $\phi_{\alpha}$ are the azimuth and elevation angles of departure (AoD)  under which the RIS views the source and the destination for $\alpha = {\rm{sr}}$ and $\alpha = {\rm{rd}}$, respectively. By assuming that the RIS is a planar surface, the wave vectors in \eqref{eq:barhsr}, $\mathbf{k}(\psi_{\alpha}, \phi_\alpha)$, are 
\vspace*{-0.2cm}
\begin{equation} \label{eq:kvec}
	\mathbf{k}(\psi_{\alpha}, \phi_\alpha) = \frac{2\pi}{\lambda} \left[ z_1, \, z_2, \,\sin(\psi_{\alpha}) \right]^T,
\vspace*{-0.2cm}
\end{equation}
with $z_1 = \cos(\psi_{\alpha})\cos(\phi_\alpha)$ and $z_2 = \sin(\psi_{\alpha})\cos(\phi_\alpha)$, and $\lambda$ is the signal wavelength. 
Also, the vector $\mathbf{u}_m$ in \eqref{eq:barhsr} is defined as $\mathbf{u}_m =[0, \, \mathrm{mod}(m-1,M_H)d_r, \, \lfloor (m-1)/M_H \rfloor d_r ]^T$, where $\mathrm{mod}$ is the modulus operation and $\lfloor \cdot \rfloor$ is the floor function. $d_r$ is the element spacing at the RIS. 

To facilitate the analysis in the next section, Theorem~\ref{Theorem:ChannelStatistics} gives the moments of the RIS-assisted (cascaded) channel for an arbitrary phase shift matrix $\pmb{\Phi}$.
\vspace*{-0.2cm}
\begin{theorem} \label{Theorem:ChannelStatistics}
	The indirect link from the source to the destination through the RIS has the following statistical moments 
\vspace*{-0.4cm}
\begin{align} \label{eq:Expecs}
	\mathbb{E} \{ |\mathbf{h}_{\mathrm{sr}}^H \pmb{\Phi} \mathbf{h}_{\mathrm{rd}}|^2 \} & =   \delta , \;\;\; \mathbb{E} \{ |\mathbf{h}_{\mathrm{sr}}^H \pmb{\Phi} \mathbf{h}_{\mathrm{rd}}|^4 \}  =  \delta^2  + a,
\vspace*{-0.2cm}
\end{align}
where $\delta = |\bar{\alpha}|^2 + M\mu \widetilde{K}$, $\bar{\alpha} = \bar{\mathbf{h}}_{\mathrm{sr}}^H \pmb{\Phi} \bar{\mathbf{h}}_{\mathrm{rd}}$, $\mu = \beta_{\mathrm{sr}}\beta_{\mathrm{rd}}/\omega$, $\omega = (K_{\mathrm{sr}}+1)(K_{\mathrm{rd}}+1)$,  $\widetilde{K} = K_{\mathrm{sr}}+ K_{\mathrm{rd}}+1$, $\widehat{K} = 1+ 2K_{\mathrm{sr}} + 2K_{\mathrm{rd}}$, and $a =2M |\bar{\alpha}|^2 \mu  \widetilde{K} + M^2 \mu^2 \widetilde{K}^2 +  2M \mu^2 \widehat{K} +8 |\bar{\alpha}|^2 \mu$.
\vspace*{-0.2cm}
\end{theorem}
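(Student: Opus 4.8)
The plan is to split the cascaded channel into its line-of-sight (LoS) and scattered parts and then exploit the independence and circular symmetry of the Gaussian contributions. Writing $X = \mathbf{h}_{\mathrm{sr}}^H \pmb{\Phi} \mathbf{h}_{\mathrm{rd}}$ and inserting the Rician model yields the four-term decomposition
\begin{equation}
	X = \bar{\alpha} + A + B + C,
\end{equation}
where $\bar{\alpha} = \bar{\mathbf{h}}_{\mathrm{sr}}^H \pmb{\Phi} \bar{\mathbf{h}}_{\mathrm{rd}}$ is deterministic, $A = \bar{\mathbf{h}}_{\mathrm{sr}}^H \pmb{\Phi} \mathbf{g}_{\mathrm{rd}}$ and $B = \mathbf{g}_{\mathrm{sr}}^H \pmb{\Phi} \bar{\mathbf{h}}_{\mathrm{rd}}$ are linear (hence circularly symmetric complex Gaussian) in a single fading vector, and $C = \mathbf{g}_{\mathrm{sr}}^H \pmb{\Phi} \mathbf{g}_{\mathrm{rd}}$ is bilinear. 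I would first record the three facts that drive every step: $\mathbf{g}_{\mathrm{sr}}$ and $\mathbf{g}_{\mathrm{rd}}$ are independent with $\mathbb{E}\{\mathbf{g}_{\alpha}\mathbf{g}_{\alpha}^H\} = \sigma_{\alpha}^2 \mathbf{I}_M$, $\sigma_{\alpha}^2 = \beta_\alpha/(K_\alpha+1)$; the phase-shift matrix is unitary, so $\pmb{\Phi}\pmb{\Phi}^H = \mathbf{I}_M$ and $\|\pmb{\Phi}\bar{\mathbf{h}}_\alpha\|^2 = \|\bar{\mathbf{h}}_\alpha\|^2 = M K_\alpha \beta_\alpha/(K_\alpha+1)$; and, by circular symmetry, the mean of any monomial in the Gaussian entries vanishes unless each fading vector enters with equally many conjugated and unconjugated factors. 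This last phase-neutrality criterion is the bookkeeping device that annihilates the vast majority of cross terms.

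For the second moment, the four summands of $X$ are mutually uncorrelated, since every cross expectation carries a lone $\mathbf{g}_{\mathrm{sr}}$ or $\mathbf{g}_{\mathrm{rd}}$ factor and hence is zero. Thus $\mathbb{E}\{|X|^2\} = |\bar{\alpha}|^2 + \mathbb{E}\{|A|^2\} + \mathbb{E}\{|B|^2\} + \mathbb{E}\{|C|^2\}$, and evaluating $\mathbb{E}\{|A|^2\} = \sigma_{\mathrm{rd}}^2\|\bar{\mathbf{h}}_{\mathrm{sr}}\|^2 = M K_{\mathrm{sr}}\mu$, $\mathbb{E}\{|B|^2\} = M K_{\mathrm{rd}}\mu$ and $\mathbb{E}\{|C|^2\} = M \sigma_{\mathrm{sr}}^2\sigma_{\mathrm{rd}}^2 = M\mu$ gives $\delta = |\bar{\alpha}|^2 + M\mu\widetilde{K}$.

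For the fourth moment I would avoid a raw expansion and instead use $\mathbb{E}\{|X|^4\} = (\mathbb{E}\{|X|^2\})^2 + \mathrm{Var}(|X|^2) = \delta^2 + a$, reducing the task to $a = \mathrm{Var}(|X|^2)$. With $X = \bar{\alpha} + Y$, $Y = A+B+C$, and $|X|^2 = |\bar{\alpha}|^2 + 2\,\mathrm{Re}(\bar{\alpha}^* Y) + |Y|^2$, the variance splits into three parts. The part $\mathrm{Var}\big(2\,\mathrm{Re}(\bar{\alpha}^* Y)\big)$ reduces to $2|\bar{\alpha}|^2\,\mathbb{E}\{|Y|^2\} = 2M|\bar{\alpha}|^2\mu\widetilde{K}$ once one observes that every monomial of $Y^2$ fails phase-neutrality, so $\mathbb{E}\{Y^2\}=0$. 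The covariance part $4\,\mathrm{Re}\big(\bar{\alpha}^* \mathbb{E}\{Y^2 Y^*\}\big)$ is where phase-neutrality is most useful: of the eighteen monomials in $\mathbb{E}\{Y^2 Y^*\}$ only the term $2\,\mathbb{E}\{ABC^*\}$ survives, and since $\mathbb{E}\{ABC^*\} = \sigma_{\mathrm{sr}}^2\sigma_{\mathrm{rd}}^2\,\bar{\mathbf{h}}_{\mathrm{sr}}^H \pmb{\Phi}\pmb{\Phi}^H\pmb{\Phi}\bar{\mathbf{h}}_{\mathrm{rd}} = \mu\bar{\alpha}$ (again via $\pmb{\Phi}\pmb{\Phi}^H = \mathbf{I}_M$), this part equals $4\,\mathrm{Re}(\bar{\alpha}^*\cdot 2\mu\bar{\alpha}) = 8|\bar{\alpha}|^2\mu$.

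The remaining and most laborious part is $\mathrm{Var}(|Y|^2) = \mathbb{E}\{|Y|^4\} - (M\mu\widetilde{K})^2$, which I would obtain by expanding $|Y|^2 = |A|^2 + |B|^2 + |C|^2 + 2\,\mathrm{Re}(A^*B) + 2\,\mathrm{Re}(A^*C) + 2\,\mathrm{Re}(B^*C)$, squaring, and retaining only the phase-neutral products. The survivors call for the higher Gaussian moments $\mathbb{E}\{|A|^4\} = 2(\mathbb{E}\{|A|^2\})^2$, the mixed moment $\mathbb{E}\{|A|^2|C|^2\} = M(M+1)K_{\mathrm{sr}}\mu^2$ together with its $\mathrm{rd}$ counterpart, and $\mathbb{E}\{|C|^4\} = 2M^2\mu^2 + 2M\mu^2$, all evaluated through the Isserlis/Wick theorem and $\pmb{\Phi}\pmb{\Phi}^H = \mathbf{I}_M$; collecting the $M^2$ and $M$ powers gives $M^2\mu^2\widetilde{K}^2 + 2M\mu^2\widehat{K}$. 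Adding the three parts yields $a$. The main obstacle is exactly this last step: systematically identifying which of the many quadratic-in-$|Y|^2$ cross terms are phase-neutral and correctly evaluating the non-Gaussian moments of the bilinear form $C$ and its couplings to $A$ and $B$, for which the Wick pairings and the unit-modulus identity $\pmb{\Phi}\pmb{\Phi}^H=\mathbf{I}_M$ are the essential tools.
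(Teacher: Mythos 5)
Your proposal is correct: I verified the individual moments ($\mathbb{E}\{|A|^2\}=MK_{\mathrm{sr}}\mu$, $\mathbb{E}\{|C|^4\}=2M(M+1)\mu^2$, $\mathbb{E}\{|A|^2|C|^2\}=M(M+1)K_{\mathrm{sr}}\mu^2$, $\mathbb{E}\{Y^2Y^*\}=2\mu\bar{\alpha}$, etc.), and they assemble exactly to $\delta$ and $a$ as stated, using $(K_{\mathrm{sr}}+K_{\mathrm{rd}})^2+\widehat{K}=\widetilde{K}^2$. The paper omits its proof, appealing only to ``known results on the moments of Rician random variables,'' and your LoS-plus-scattered decomposition with the phase-neutrality/Wick bookkeeping is precisely the standard way those results are obtained, so this is essentially the same approach, carried out in full.
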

\begin{proof}
The proof follows by using known results on the moments of Rician random variables. It is omitted due to space limitations.
\end{proof}
The second and fourth moments demonstrate that the array gain due to the presence of an RIS is
proportional to $M$ and $M^2$, respectively.
\vspace*{-0.2cm}
\subsection{Phase Shift Designs and Channel Rate}
\vspace*{-0.2cm}
If the source transmits a data symbol $s$ with $\mathbb{E}\{ |s|^2 \} = 1$, the received signal $y \in \mathbb{C}$ at the destination is 
\vspace*{-0.2cm}
\begin{equation} \label{eq:ReceiveSig}
	y = \sqrt{\rho} \left( h_{\mathrm{sd}}  +  \mathbf{h}_{\mathrm{sr}}^H \pmb{\Phi} \mathbf{h}_{\mathrm{rd}} \right) s  + n,
\vspace*{-0.2cm}
\end{equation}
where $\rho$ is the transmit power and $n \sim \mathcal{CN}(0,\sigma^2)$ is the additive noise. The phase shift matrix $\pmb{\Phi}$ is usually optimized based on the CSI. In this paper, we focus our attention on two design criteria.
\vspace*{-0.2cm}
\begin{itemize}[leftmargin=*]
	\item[$i)$] \textit{Short-term phase shift design}: The phase shifts of the RIS elements are optimized based on perfect CSI, which encompasses large-scale and small-scale fading statistics.
	\vspace*{-0.2cm}
    \item[$ii)$] \textit{Long-term phase shift design}: The phase shifts of the RIS elements are optimized based on statistical CSI. In particular, the optimal phase shift matrix is obtained by maximizing the average SNR at the destination.
\vspace*{-0.2cm}
\end{itemize}
The short-term phase shift design corresponds to the best achievable performance.
Let us assume that we are interested in maximizing the received signal strength, i.e., $\rho  | h_{\mathrm{sd}}  +  \mathbf{h}_{\mathrm{sr}}^H \pmb{\Phi} \mathbf{h}_{\mathrm{rd}} |^2$. Then, the short-term optimal phase shift of the $m$-th RIS element, which is denoted by $\theta_{m}^{\mathsf{opt}, \mathsf{st}}$, is \cite{wu2019intelligent,van2021outage}
\vspace*{-0.2cm}
\begin{equation} \label{eq:Phase1}
	\theta_{m}^{\mathsf{opt}, \mathsf{st}} = \arg(h_{\mathrm{sd}}) - \arg([\mathbf{h}_{\mathrm{sr}}^\ast]_m)  - \arg([\mathbf{h}_{\mathrm{rd}}]_m), \forall m,
\vspace*{-0.1cm}
\end{equation}
where $[ \mathbf{h}_{\mathrm{sr}} ]_m$ and $[\mathbf{h}_{\mathrm{rd}}]_m$ are the $m$-th element of $\mathbf{h}_{\mathrm{sr}} $ and $\mathbf{h}_{\mathrm{rd}}$, respectively.   
The optimal phase shift of each RIS element in \eqref{eq:Phase1} needs to be updated every channel coherent interval. 
By contrast, the long-term phase shift matrix can be applied for a longer period of time, which spans many coherence intervals.
Conditioned on the phase shift matrix and the CSI, the channel rate is formulated as
\vspace*{-0.2cm}
\begin{equation} \label{eq:Rran} 
	R = \log_2 \left(1 +  \gamma \right), \mbox{[b/s/Hz]},
\vspace*{-0.1cm}
\end{equation}
where the signal-to-noise ratio (SNR) value $\gamma$ is
\vspace*{-0.2cm}
\begin{subnumcases}
{\gamma =}
\nu |h_{\mathrm{sd}} + \mathbf{h}_{\mathrm{sr}}^H \pmb{\Phi}^{\mathsf{opt}, \mathrm{lt}} \mathbf{h}_{\mathrm{rd}}|^2, & \mbox{Long-term}, \label{eq:SNRLT}\\
 \nu  \left( | h_{\mathrm{sd}} | + \sum\limits_{m = 1}^M  \omega_m \right)^2,&  \mbox{Short-term}, \label{eq:SNRST}
\end{subnumcases}
where $\omega_m = | [ \mathbf{h}_{\mathrm{sr}} ]_m | | [\mathbf{h}_{\mathrm{rd}}]_m |, \forall m$. The SNR value $\gamma$ that corresponds to the short-term phase shift design is obtained by using the optimal phase shift design in \eqref{eq:Phase1}. As far as the long-term phase shift design is concerned, the optimal phase shift matrix $\pmb{\Phi}^{\mathsf{opt}, \mathrm{lt}} = \mathrm{diag}\big([e^{j\theta_{1}^{\mathsf{opt}, \mathsf{lt}}}, \ldots, e^{j\theta_{M}^{\mathsf{opt}, \mathsf{lt}}}]^T \big)$ is obtained in 
Lemma~\ref{lemma:ChanStaDesign}.
\vspace*{-0.2cm}
\begin{lemma} \label{lemma:ChanStaDesign}
	If the $m$-th phase shift of the RIS is set as follows
\vspace*{-0.2cm}
	\begin{equation} \label{eq:PhaseLoS}
		\theta_{m}^{\mathsf{opt}, \mathsf{lt}} = -\arg([\bar{\mathbf{h}}_{\mathrm{sr}}^\ast]_m)  - \arg([\bar{\mathbf{h}}_{\mathrm{rd}}]_m) ,
\vspace*{-0.2cm}
	\end{equation}
	then the average received SNR, $\mathbb{E}\{\gamma \}$ with $\gamma$ given in \eqref{eq:SNRLT} is maximized.
	\vspace*{-0.2cm}
\end{lemma}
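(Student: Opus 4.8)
The plan is to reduce the maximization of $\mathbb{E}\{\gamma\}$ to an elementwise phase-alignment problem. First I would write $\mathbb{E}\{\gamma\} = \nu\, \mathbb{E}\{|h_{\mathrm{sd}} + \mathbf{h}_{\mathrm{sr}}^H \pmb{\Phi} \mathbf{h}_{\mathrm{rd}}|^2\}$ from \eqref{eq:SNRLT} and expand the squared magnitude into three contributions: $\mathbb{E}\{|h_{\mathrm{sd}}|^2\}$, $\mathbb{E}\{|\mathbf{h}_{\mathrm{sr}}^H \pmb{\Phi} \mathbf{h}_{\mathrm{rd}}|^2\}$, and the cross term $2\,\mathbb{E}\{\mathrm{Re}\{h_{\mathrm{sd}}^\ast\, \mathbf{h}_{\mathrm{sr}}^H \pmb{\Phi} \mathbf{h}_{\mathrm{rd}}\}\}$. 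Since $h_{\mathrm{sd}} = \sqrt{\beta_{\mathrm{sd}}}\, g_{\mathrm{sd}}$ is zero-mean and independent of the indirect link, the cross term vanishes, while $\mathbb{E}\{|h_{\mathrm{sd}}|^2\} = \beta_{\mathrm{sd}}$ is a constant. Hence $\mathbb{E}\{\gamma\} = \nu\big(\beta_{\mathrm{sd}} + \mathbb{E}\{|\mathbf{h}_{\mathrm{sr}}^H \pmb{\Phi} \mathbf{h}_{\mathrm{rd}}|^2\}\big)$, so that all dependence on $\pmb{\Phi}$ is confined to the second-moment term.

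Next I would invoke Theorem~\ref{Theorem:ChannelStatistics}, which gives $\mathbb{E}\{|\mathbf{h}_{\mathrm{sr}}^H \pmb{\Phi} \mathbf{h}_{\mathrm{rd}}|^2\} = \delta = |\bar{\alpha}|^2 + M\mu\widetilde{K}$. Because $M$, $\mu$, and $\widetilde{K}$ do not depend on the phase shifts, maximizing $\mathbb{E}\{\gamma\}$ is equivalent to maximizing $|\bar{\alpha}|^2 = |\bar{\mathbf{h}}_{\mathrm{sr}}^H \pmb{\Phi} \bar{\mathbf{h}}_{\mathrm{rd}}|^2$, i.e., to coherently combining the LoS components through the unit-modulus diagonal matrix $\pmb{\Phi}$.

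For the final step I would expand $\bar{\alpha} = \sum_{m=1}^M [\bar{\mathbf{h}}_{\mathrm{sr}}^\ast]_m\, e^{j\theta_m}\, [\bar{\mathbf{h}}_{\mathrm{rd}}]_m$ and bound it using the triangle inequality, $|\bar{\alpha}| \le \sum_{m=1}^M |[\bar{\mathbf{h}}_{\mathrm{sr}}]_m|\,|[\bar{\mathbf{h}}_{\mathrm{rd}}]_m|$, with equality if and only if all summands share a common phase. Writing the phase of the $m$-th summand as $\theta_m + \arg([\bar{\mathbf{h}}_{\mathrm{sr}}^\ast]_m) + \arg([\bar{\mathbf{h}}_{\mathrm{rd}}]_m)$, the choice in \eqref{eq:PhaseLoS}, namely $\theta_m = -\arg([\bar{\mathbf{h}}_{\mathrm{sr}}^\ast]_m) - \arg([\bar{\mathbf{h}}_{\mathrm{rd}}]_m)$, makes every summand real and nonnegative, which attains the bound. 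This shows that \eqref{eq:PhaseLoS} maximizes $|\bar{\alpha}|^2$, hence $\delta$, hence $\mathbb{E}\{\gamma\}$.

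I expect no serious obstacle beyond bookkeeping. The only delicate points are justifying that the cross term averages to zero, which rests on the independence of the direct and indirect links together with $\mathbb{E}\{g_{\mathrm{sd}}\} = 0$, and verifying that the phase-aligning condition coincides exactly with the expression \eqref{eq:PhaseLoS}; both are routine once the reduction via Theorem~\ref{Theorem:ChannelStatistics} is in place.
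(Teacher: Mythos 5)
Your proof is correct and follows essentially the same route the paper sketches (maximizing $\mathbb{E}\{\gamma\}$ over the unit-modulus phase constraints): the cross term with the zero-mean direct link vanishes, the only $\pmb{\Phi}$-dependence in $\delta$ from Theorem~\ref{Theorem:ChannelStatistics} is through $|\bar{\alpha}|^2$, and the triangle-inequality phase alignment yields exactly \eqref{eq:PhaseLoS}. No gaps; this is a valid filling-in of the proof the paper omits.
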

\begin{proof}
	The proof follows by maximizing the cost function $\mathbb{E}\{ \gamma\}$ subject to the phase shift constraints $\theta_m \in [-\pi, \pi], \forall m$. The detailed proof is omitted due to space limitations.
\end{proof}
The phase shift design in \eqref{eq:PhaseLoS} depends only on the LoS components of the channels.
The short-term and long-term phase shift designs are both aimed at boosting the strength of the received signal. The short-term phase shift design is, in general, an upper-bound for the long-term phase shift design. In analytical terms, in fact, we have the following  
\vspace*{-0.2cm}
\begin{equation} \label{eq:Ratebound}
	\begin{split}
		& \log_2(1 + \nu |h_{\mathrm{sd}} + \mathbf{h}_{\mathrm{sr}}^H \pmb{\Phi}^{\mathsf{opt}, \mathrm{lt}} \mathbf{h}_{\mathrm{rd}}|^2)  \stackrel{(a)}{\leq} \underset{\{ \theta_m \}}{\max}\, \log_2 ( 1 + \gamma) \\
		&\stackrel{(b)}{=} \log_2(1 + \nu |h_{\mathrm{sd}} + \mathbf{h}_{\mathrm{sr}}^H \pmb{\Phi}^{\mathsf{opt}, \mathsf{st}} \mathbf{h}_{\mathrm{rd}}|^2,
	\end{split}
\vspace*{-0.2cm}
\end{equation}
where  $\pmb{\Phi}^{\mathsf{opt}, \mathsf{st}} = \mathrm{diag}\big([e^{j\theta_{1}^{\mathsf{opt}, \mathsf{st}}}, \ldots, e^{j\theta_{M}^{\mathsf{opt}, \mathsf{st}}}]^T \big)$ is the short-term phase shift matrix  with $\theta_{m}^{\mathsf{opt}, \mathrm{st}}, \forall m,$ defined in \eqref{eq:Phase1}. 
In particular, $(a)$ is obtained because the phase shift solution that maximizes the average received SNR is a feasible point of the capacity maximization problem as a function of the instantaneous CSI, and $(b)$ is obtained because the short-term phase shift design in \eqref{eq:Phase1} is the optimal solution.
\vspace*{-0.3cm}
\section{Coverage Probability and Ergodic Channel Rate}
\vspace*{-0.3cm}
In this section, we introduce analytical frameworks for the coverage probability and the ergodic rate for the short-term and long-term phase shift designs.
\begin{figure*}[t]
	\begin{minipage}{0.33\textwidth}
		\centering
		\includegraphics[trim=0.8cm 0cm 1.3cm 0.6cm, clip=true, width=2.3in]{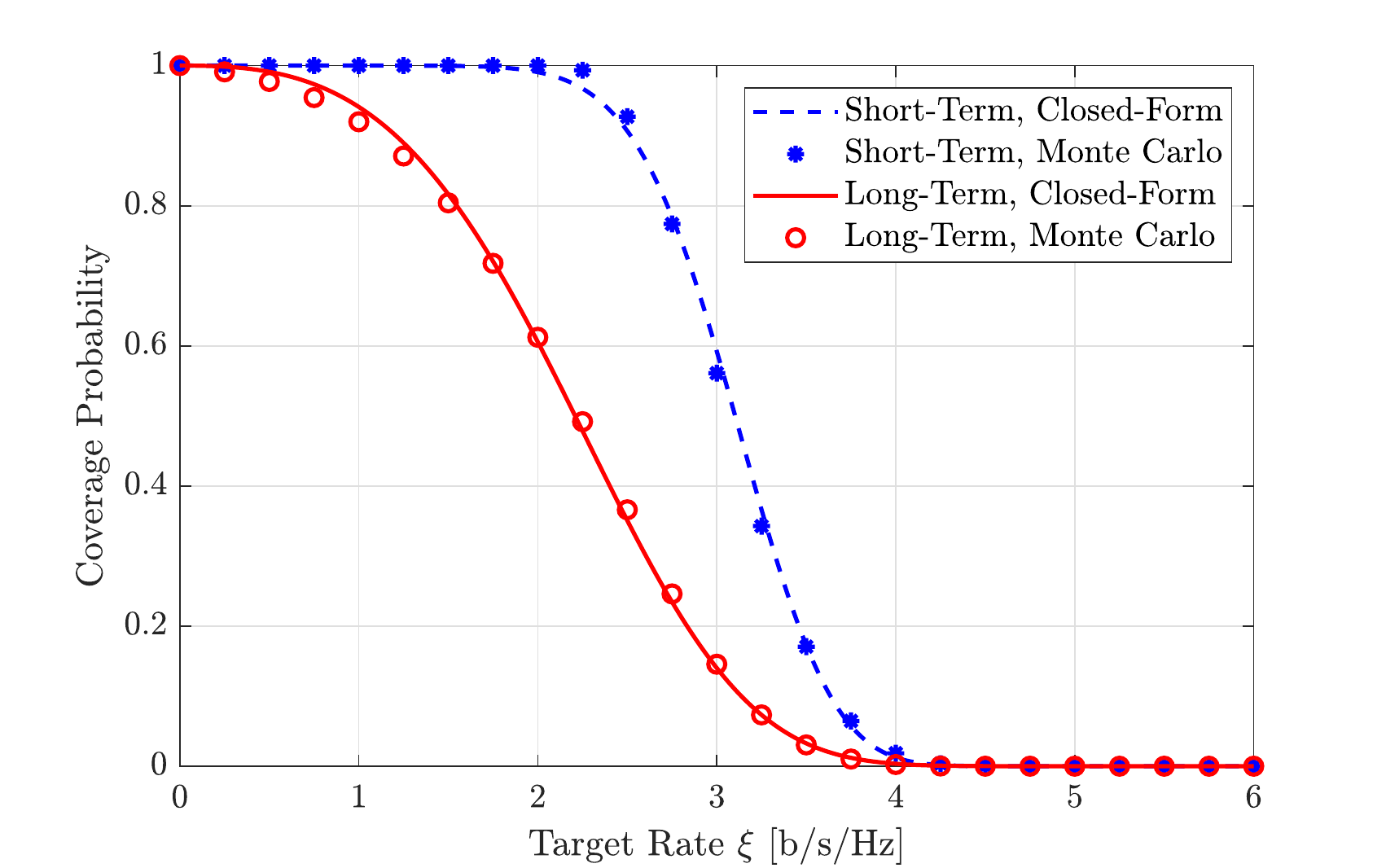} \vspace*{-0.4cm}\\
		(a)
		\vspace*{-0.3cm}
	\end{minipage}
	\begin{minipage}{0.33\textwidth}
		\centering
		\includegraphics[trim=0.8cm 0cm 1.3cm 0.5cm, clip=true, width=2.3in]{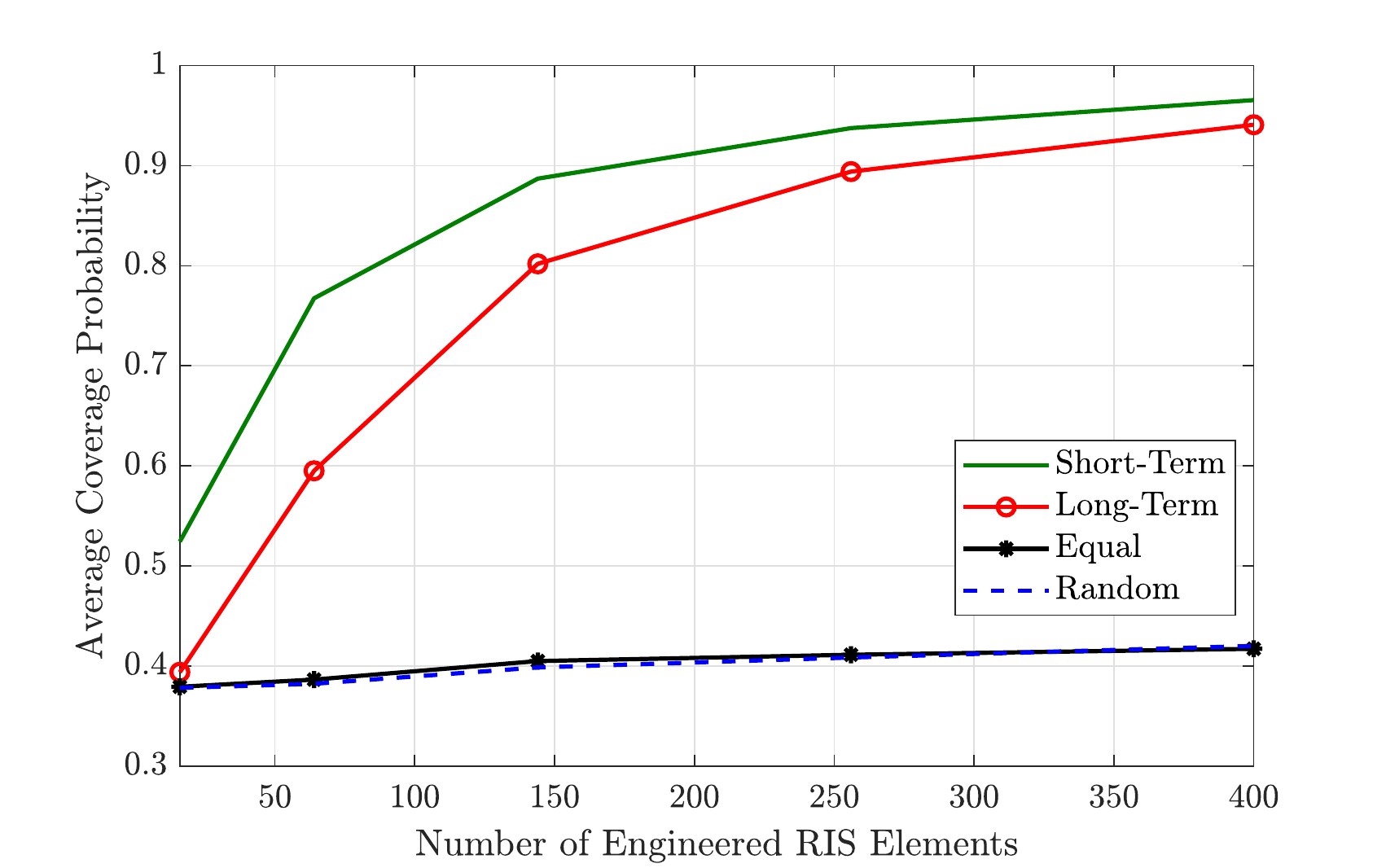} \vspace*{-0.4cm} \\
		(b)
		\vspace*{-0.3cm}
	\end{minipage}
	\begin{minipage}{0.33\textwidth}
		\centering
		\includegraphics[trim=0.8cm 0cm 1.3cm 0.5cm, clip=true, width=2.3in]{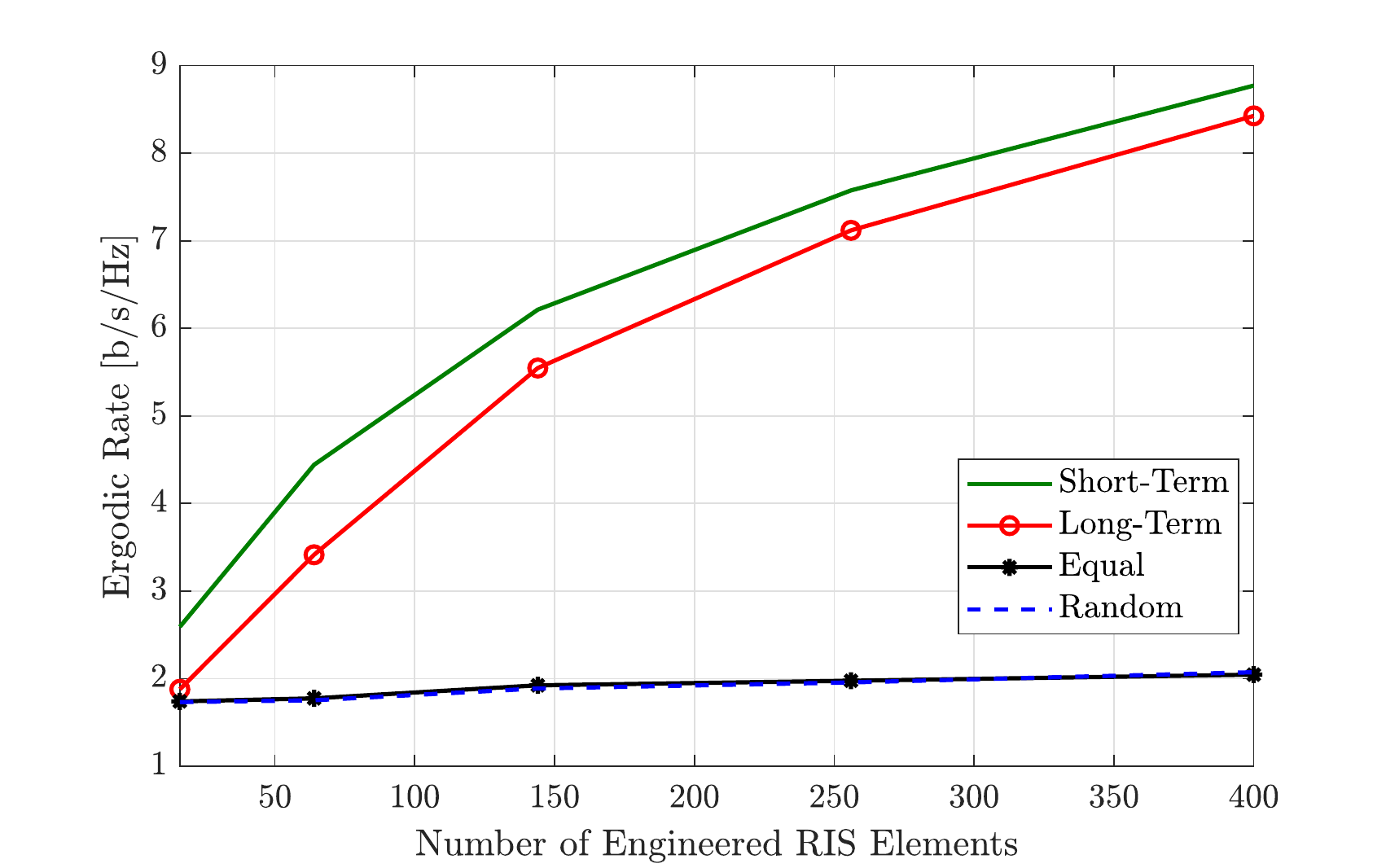} \vspace*{-0.4cm}\\
		(c)
		\vspace*{-0.3cm}
	\end{minipage}
	\caption{The system performance: $(a)$ Coverage probability v.s. the target rate [b/s/Hz] with $M=100$ and the destination is located at $(180, 15, 15)$~m; $(b)$ Average coverage probability (defined as $\mathbb{E} \{ \mathrm{Pr}(R > \xi) \}$, which is averaged over different realizations of the destination's location) v.s. the number of  RIS elements; and $(c)$ Ergodic rate v.s. the number of RIS elements.} \label{Fig1}
	\vspace*{-0.3cm}
\end{figure*}
\vspace*{-0.2cm}
\subsection{Coverage Probability} \label{Sub:CovP}
\vspace*{-0.2cm}
From the channel rate in \eqref{eq:Rran}, we define the coverage probability for a given target rate $\xi$ [b/s/Hz] as 
$	P_{\mathsf{cov}} = 1 - \mathrm{Pr}(R < \xi),$
where $\mathrm{Pr}(\cdot)$ denotes the probability of an event. By denoting $z = 2^\xi -1$, the coverage probability can be rewritten as
\vspace*{-0.2cm}
\begin{equation} \label{eq:PcovRan}
	P_{\mathsf{cov}} (z) =  1 - \mathsf{Pr}(\gamma < z),
\vspace*{-0.2cm}
\end{equation}
We utilize the moment-matching method to compute the coverage probability in \eqref{eq:PcovRan}.
\begin{theorem}~\label{Theorem:CovProbRan}
The coverage probability in \eqref{eq:PcovRan} can be formulated, in a closed-form expression, as
\vspace*{-0.2cm}
\begin{equation} \label{eq:Pcovlt}
	P_{\mathsf{cov}} (z) \approx \Gamma\left( k, z/w \right)/\Gamma (k),
\vspace*{-0.2cm}
\end{equation}
where the shape parameter $k$ and the scale parameter $w$ depend on the criterion for optimizing the phase shifts of the RIS elements. If the long-term phase shift design is utilized, we have
\vspace*{-0.2cm}
\begin{align}
	k
	&= \frac{ \left(\beta_{\mathrm{sd}} + 
		 o_{1} \beta_{\mathrm{sr}} \beta_{\mathrm{rd}} 
		\right)^2 }{\beta_{\mathrm{sd}}^2 +
		o_{2}  \beta_{\mathrm{sr}}^2 \beta_{\mathrm{rd}}^2  + 2 \beta_{\mathrm{sd}} o_{1} \beta_{\mathrm{sr}} \beta_{\mathrm{rd}} 
	},
	\\
	w
	&=
	\frac{
		\nu \beta_{\mathrm{sd}}^2 +
		\nu^2 o_{2}  \beta_{\mathrm{sr}}^2 \beta_{\mathrm{rd}}^2 + 2 \nu \beta_{\mathrm{sd}} o_{1} \beta_{\mathrm{sr}} \beta_{\mathrm{rd}}
	}{
		 \beta_{\mathrm{sd}} + 
		 o_{1} \beta_{\mathrm{sr}} \beta_{\mathrm{rd}} 
	},
\vspace*{-0.2cm}
\end{align}
where $\nu = \rho /\sigma^2$ is the ratio between the transmit power and noise power, and the scalars $o_1$ and $o_2$ are defined as
\vspace*{-0.2cm}
\begin{align}
 o_{1} &= \omega^{-1}({{K_{\mathrm{sr}}}{K_{\mathrm{rd}}} \eta + \widetilde{K} M}), \label{eq:o1}\\
 o_2 &= \omega^{-2}(2\eta K_{\mathrm{sr}}K_{\mathrm{rd}}( {M\widetilde{K} + 4} ) +  M^2\widetilde{K}^2 + 2M( {2\widetilde K - 1} )), \label{eq:o2}
 \vspace*{-0.2cm}
\end{align}
with $\eta = | \bar{\mathbf{h}}_{\mathrm{sr}}^H  \mathbf{\Phi} {\bar{\mathbf{h}} }_{\mathrm{rd}} |^2$. 

If, on the other hand, the short-term phase shift design is considered, the shape and scale parameters are
\vspace*{-0.2cm}
\begin{align}
	k &= \frac{k_c ( k_c + 1 ) }{2 ( 2k_c + 3 )},	w  = 2 \nu w_c^2\left( 2k_c + 3 \right),
\vspace*{-0.2cm}
\end{align}
where $k_c = \frac{( {c_1} + {c_2}\sqrt {{\beta_{\mathrm{sr}}}{\beta_{\mathrm{rd}}}} )^2}{c_3 + c_4\beta_{\mathrm{sr}}\beta_{\mathrm{rd}}}$ and  $w_c = \frac{c_3 + c_4\beta_{\mathrm{sr}}\beta_{\mathrm{rd}}}{{c_1} + {c_2}\sqrt {{\beta_{\mathrm{sr}}}{\beta_{\mathrm{rd}}}}} $ with
\vspace*{-0.2cm}
\begin{align}
& c_1 = {0.5\sqrt {\pi\beta_{\mathrm{sd}}} }, c_2 = 0.25 M \pi t_{\mathrm{sr}} t_{\mathrm{rd}} \omega^{-0.5} ,   \\
& 
c_3 = \frac{4 - \pi }{4}\beta_{\mathrm{sd}}, c_4 = M
- \frac{M{{\pi ^2}}}{{16}} t_{\mathrm{sr}}^2 t_{\mathrm{rd}}^2 \omega^{-1},
\vspace*{-0.2cm}
\end{align}
and $t_{\mathrm{sr}} =  {_1{F_1}\left( { - 0.5,1, - {K_{\mathrm{sr}}}} \right)}$, $t_{\mathrm{rd}} = {_1{F_1}\left( { - 0.5,1, - {K_{\mathrm{rd}}}} \right)}$ with $_1F_1(\cdot, \cdot, \cdot)$ being the confluent hypergeometric function of the first kind.
\end{theorem}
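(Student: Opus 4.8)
The plan is to approximate the instantaneous SNR $\gamma$ by a Gamma-distributed random variable and to exploit the fact that a Gamma variable with shape $k$ and scale $w$ has complementary CDF $\mathrm{Pr}(\gamma > z) = \Gamma(k, z/w)/\Gamma(k)$, which is exactly the claimed form of $P_{\mathsf{cov}}(z)$ in \eqref{eq:Pcovlt}. The parameters $k$ and $w$ are then fixed by the moment-matching method: writing $m_1 = \mathbb{E}\{\gamma\}$ and $m_2 = \mathbb{E}\{\gamma^2\}$, the first two moments of the Gamma surrogate, $kw = m_1$ and $k(k+1)w^2 = m_2$, give $k = m_1^2/(m_2 - m_1^2)$ and $w = (m_2 - m_1^2)/m_1$. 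The two designs differ only in how these moments are computed, so the derivation splits accordingly, and the final step $P_{\mathsf{cov}}(z) = 1 - \mathrm{Pr}(\gamma < z) = \Gamma(k, z/w)/\Gamma(k)$ is common to both.

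For the long-term design, $\gamma = \nu|h_{\mathrm{sd}} + C|^2$ with $C = \mathbf{h}_{\mathrm{sr}}^H \pmb{\Phi}^{\mathsf{opt},\mathrm{lt}}\mathbf{h}_{\mathrm{rd}}$. Since $h_{\mathrm{sd}}$ is zero-mean, circularly symmetric, and independent of $C$, expanding $|h_{\mathrm{sd}} + C|^2$ and $|h_{\mathrm{sd}} + C|^4$ and taking expectations annihilates every cross term carrying an odd power of $h_{\mathrm{sd}}$. Using $\mathbb{E}\{|h_{\mathrm{sd}}|^2\} = \beta_{\mathrm{sd}}$ and $\mathbb{E}\{|h_{\mathrm{sd}}|^4\} = 2\beta_{\mathrm{sd}}^2$ together with the cascaded-channel moments $\mathbb{E}\{|C|^2\} = \delta$ and $\mathbb{E}\{|C|^4\} = \delta^2 + a$ from Theorem~\ref{Theorem:ChannelStatistics}, I obtain $m_1 = \nu(\beta_{\mathrm{sd}} + \delta)$ and $m_2 = \nu^2(2\beta_{\mathrm{sd}}^2 + 4\beta_{\mathrm{sd}}\delta + \delta^2 + a)$, hence $m_2 - m_1^2 = \nu^2(\beta_{\mathrm{sd}}^2 + 2\beta_{\mathrm{sd}}\delta + a)$. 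Substituting the definitions of $\delta$ and $a$ and regrouping by powers of $\beta_{\mathrm{sr}}\beta_{\mathrm{rd}}$ reproduces the stated $k$ and $w$ with $o_1, o_2$ as in \eqref{eq:o1}--\eqref{eq:o2}; the identity $2\widetilde{K} - 1 = \widehat{K}$ is what closes the algebra for the $o_2$ term.

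For the short-term design $\gamma = \nu S^2$ with $S = |h_{\mathrm{sd}}| + \sum_m \omega_m$, the natural Gamma surrogate is the amplitude $S$ rather than $\gamma$ itself, because the first two moments of a sum of magnitudes are tractable whereas those of $\gamma$ would demand fourth-order moments of products of Rician magnitudes. I would therefore first model $S \sim \mathrm{Gamma}(k_c, w_c)$ by matching $\mathbb{E}\{S\}$ and $\mathrm{Var}(S)$. Independence across the RIS elements gives $\mathbb{E}\{S\} = \mathbb{E}\{|h_{\mathrm{sd}}|\} + \sum_m \mathbb{E}\{\omega_m\}$ and $\mathrm{Var}(S) = \mathrm{Var}(|h_{\mathrm{sd}}|) + \sum_m \mathrm{Var}(\omega_m)$; the Rayleigh identity $\mathbb{E}\{|h_{\mathrm{sd}}|\} = \tfrac12\sqrt{\pi\beta_{\mathrm{sd}}}$ produces $c_1, c_3$, while the Rician-magnitude mean $\mathbb{E}\{|[\mathbf{h}_\alpha]_m|\} = \tfrac12\sqrt{\pi\beta_\alpha/(K_\alpha+1)}\,{}_1F_1(-\tfrac12,1,-K_\alpha)$, combined with $\mathbb{E}\{\omega_m^2\} = \beta_{\mathrm{sr}}\beta_{\mathrm{rd}}$, produces $c_2, c_4$ and hence $k_c, w_c$. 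I would then propagate through $\gamma = \nu S^2$: under the Gamma model $\mathbb{E}\{S^2\} = k_c(k_c+1)w_c^2$ and $\mathbb{E}\{S^4\} = k_c(k_c+1)(k_c+2)(k_c+3)w_c^4$, so a second moment match, using $(k_c+2)(k_c+3) - k_c(k_c+1) = 2(2k_c+3)$, yields exactly $k = k_c(k_c+1)/[2(2k_c+3)]$ and $w = 2\nu w_c^2(2k_c+3)$.

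The main obstacle is twofold. In the long-term case it is the fourth-moment bookkeeping: one must verify that all terms odd in $h_{\mathrm{sd}}$ vanish and then reconcile the raw quantities $\delta, a$ with the compact $o_1, o_2$, which is routine but lengthy. In the short-term case the delicate points are the Rician-magnitude moment identities expressed through ${}_1F_1$ and, conceptually, the nested two-stage matching, namely modelling the amplitude $S$ as Gamma and only afterwards the SNR $\gamma = \nu S^2$, which is precisely what renders the closed forms attainable. Since moment matching preserves only the first two moments, the outcome is an approximation, consistent with the ``$\approx$'' in \eqref{eq:Pcovlt}.
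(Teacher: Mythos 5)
Your proposal is correct and follows essentially the same route as the paper's (omitted) proof: match the first two moments of the SNR to a Gamma distribution using the cascaded-channel moments of Theorem~\ref{Theorem:ChannelStatistics} for the long-term case, and use a nested amplitude-then-SNR Gamma match for the short-term case, which is exactly what the structure $k = k_c(k_c+1)/(2(2k_c+3))$, $w = 2\nu w_c^2(2k_c+3)$ reveals. Your moment computations (including the vanishing of the odd-in-$h_{\mathrm{sd}}$ cross terms, the identity $2\widetilde{K}-1=\widehat{K}$, and the Rayleigh/Rician magnitude means behind $c_1,\ldots,c_4$) all check out.
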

\begin{proof}
The proof is based on computing the mean and the variance of the SNR expressions in \eqref{eq:SNRLT} and \eqref{eq:SNRST}. To this end, the closed-form expressions in Theorem~\ref{Theorem:ChannelStatistics} is used. Then, the obtained mean and variance are matched to those of a Gamma distribution. The details of the proof are omitted due to space limitations.
\vspace*{-0.1cm}
\end{proof}
The coverage probability in \eqref{eq:Pcovlt} offers a simple closed-form expression for evaluating the performance of RIS-assistusually ed communications without the need of resorting to Monte Carlo simulations.
 As the number of RIS elements is usually sufficiently large, the obtained analytical expressions can be further simplified. 
If, for example, we ignore the nondominant terms, the shape parameter tends to $k \rightarrow ({{K_{\mathrm{sr}}}{K_{\mathrm{rd}}} \eta + \widetilde{K} M})^2 / ( 2\eta K_{\mathrm{sr}}K_{\mathrm{rd}} {M\widetilde{K}}  +  M^2\widetilde{K}^2)$ for the long-term phase shift design and to $k \rightarrow \left( {c_1} + {c_2}\sqrt {{\beta_{\mathrm{sr}}}{\beta_{\mathrm{rd}}}}  \right)^2 /(4c_3 + 4c_4{\beta_{\mathrm{sr}}}{\beta_{\mathrm{rd}}})$ for the short-term phase shift design.
If $M \rightarrow \infty$, in addition, we obtain $k \rightarrow 1$ and $k \rightarrow \infty $ for the long-term and short-term phase shift designs, respectively.
Furthermore, the scale parameter tends to $w \rightarrow  \nu^2 o_2 \beta_{\mathrm{sr}}  \beta_{\mathrm{rd}} / o_1$ and to $w \rightarrow 4\nu (c_3 + c_4 \beta_{\mathrm{sr}} \beta_{\mathrm{rd}} )$ for the long-term and short-term phase shift designs, respectively.
If $M \rightarrow \infty$, we obtain $w \rightarrow \infty$. By rewriting the Gamma function in \eqref{eq:Pcovlt} in a series expression and ignoring the high-order terms, the coverage probability in \eqref{eq:Pcovlt} is simplified to 
$P_{\mathsf{cov}} \rightarrow 1 - \frac{(z/w)^k}{k^2 \Gamma(k)}$ \cite{9195523}.
Based on the obtained values of $k$ and $w$, this implies that the coverage probability tends to $1$ as $M \rightarrow \infty$ for both the short-term and the long-term phase shift designs. This implies that, if the number of reconfigurable RIS elements is large enough, an RIS is capable of offering a good coverage.
It is worth mentioning that the coverage probability in \eqref{eq:Pcovlt} can applied to arbitrary phase shift designs, including the random and equal phase shifts as reported in \cite{van2021outage}. 

\vspace*{-0.2cm}
\subsection{Ergodic Channel Rate}
\vspace*{-0.2cm}
The channel rate in \eqref{eq:Rran} depends on the small-scale and large-scale fading coefficients. In this section, we study the ergodic channel rate over a long time period by averaging out the small-scale fading as follows
\vspace*{-0.2cm}
\begin{equation} \label{eq:ErgodicRate1}
\bar{R} = \mathbb{E}\{ \log_2 ( 1 + \gamma)\} , \mbox{ [b/s/Hz]}.
\vspace*{-0.2cm}
\end{equation}
A closed-form expression of \eqref{eq:ErgodicRate1} is given in Lemma~\ref{lemma:ErgodicRate} that still relies on a moment-marching approach, according to which the received SNR is matched to a Gamma distribution.
\vspace*{-0.4cm}
\begin{lemma} \label{lemma:ErgodicRate}
The ergodic channel capacity in \eqref{eq:ErgodicRate1} can be formulated in the closed-form expression as follows: 
\vspace*{-0.2cm}
	\begin{align} \label{eq:ErgodicRate}
		\bar{R} = \frac{1}{{\Gamma \left( {{k}} \right)\ln \left( 2 \right)}}G_{2,3}^{3,1}\left( {\left. {\frac{1}{{{w}}}} \right|\begin{array}{*{20}{c}}
				{0,1}\\
				{0,0,{k}}
		\end{array}} \right),
	\vspace*{-0.2cm}
	\end{align}
where $G_{p,q}^{m,n}\Big( { z \Big|\begin{array}{*{20}{c}}
		{a_1,\ldots, a_q}\\
		{b_1,\ldots,b_p}
\end{array}} \Big)$ is  the  Meijer-G  function, and, similar to Theorem~\ref{Theorem:CovProbRan}, $k$ and $w$ are the shape parameter and the scale parameter of the approximating Gamma distribution, respectively.
\end{lemma}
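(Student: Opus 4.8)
The plan is to combine the moment-matching approximation underlying Theorem~\ref{Theorem:CovProbRan} with a Mellin-transform evaluation of the resulting logarithmic integral. By construction, the received SNR $\gamma$ is approximated by a Gamma random variable with shape parameter $k$ and scale parameter $w$, so its probability density function is $f_\gamma(x) = x^{k-1} e^{-x/w}/(\Gamma(k)\, w^{k})$ for $x \geq 0$. Substituting this density into \eqref{eq:ErgodicRate1} and converting the base-$2$ logarithm to a natural logarithm reduces the claim to evaluating the single integral
\begin{equation*}
  \bar{R} = \frac{1}{\Gamma(k)\, w^{k}\ln(2)} \int_0^\infty \ln(1+x)\, x^{k-1} e^{-x/w}\, dx .
\end{equation*}

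The core step is to express both transcendental factors of the integrand as Meijer-G functions and then apply the tabulated integral of a product of two Meijer-G functions. Concretely, I would use the standard representations
\begin{equation*}
  \ln(1+x) = G_{2,2}^{1,2}\!\left( x \left| \begin{array}{c} 1,1 \\ 1,0 \end{array} \right.\right), \qquad e^{-x/w} = G_{0,1}^{1,0}\!\left( \frac{x}{w} \left| \begin{array}{c} - \\ 0 \end{array} \right.\right) ,
\end{equation*}
so that the integral becomes a Mellin convolution of the two G-functions weighted by $x^{k-1}$. Equivalently, it can be written through the Mellin pair $\mathcal{M}[\ln(1+x)](s) = \Gamma(1+s)\Gamma(-s)^2/\Gamma(1-s)$ and $\mathcal{M}[x^{k-1}e^{-x/w}](1-s) = w^{k-s}\Gamma(k-s)$, whose product along the Mellin--Barnes contour is precisely the integrand defining a Meijer-G function of argument $(1/w)^{s}$.

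Carrying out the bookkeeping, the three numerator Gamma factors with argument $-s$, namely $\Gamma(-s)$, $\Gamma(-s)$ and $\Gamma(k-s)$, produce the lower parameter list $\{0,0,k\}$ (hence $m=3$, $q=3$); the factor $\Gamma(1+s)$ contributes $n=1$ with the first upper parameter equal to $0$; and the denominator factor $\Gamma(1-s)$, read as $\Gamma(a_2-s)$ with $a_2=1$, supplies the second upper parameter, giving $p=2$. This yields exactly the $G_{2,3}^{3,1}$ structure with argument $1/w$. Crucially, the power $w^{k}$ generated by the Mellin transform of the Gamma density cancels the prefactor $w^{-k}$, leaving the normalization $1/(\Gamma(k)\ln(2))$ and reproducing \eqref{eq:ErgodicRate}.

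The main obstacle is the careful parameter bookkeeping, because the Meijer-G representation of the integrand is not unique: the denominator factor $\Gamma(1-s)$ may be assigned either to the upper list as $\Gamma(a_j-s)$ or to the lower list as $\Gamma(1-b_j+s)$, and only the former choice reproduces the stated $G_{2,3}^{3,1}$. One must therefore fix the representation consistently, verify the convergence conditions of the product-integral formula, and confirm that the spurious $w^{k}$ indeed cancels. As a sanity check I would specialize to $k=1$, for which the integral collapses to the exponential-integral identity $\int_0^\infty e^{-x/w}\ln(1+x)\, dx = -w\, e^{1/w}\,\mathrm{Ei}(-1/w)$, and verify that \eqref{eq:ErgodicRate} reproduces it.
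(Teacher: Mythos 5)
Your proposal is correct and follows essentially the same route as the paper, which simply defers to the reference \cite{9195523}: the received SNR is replaced by its moment-matched Gamma approximation with parameters $(k,w)$ from Theorem~\ref{Theorem:CovProbRan}, and the resulting integral $\int_0^\infty \ln(1+x)x^{k-1}e^{-x/w}\,dx$ is evaluated via the Meijer-G product-integral (equivalently, the Mellin--Barnes representation), yielding \eqref{eq:ErgodicRate} after the $w^{k}$ factor cancels. Your parameter bookkeeping and the $k=1$ exponential-integral sanity check are both consistent with the stated $G_{2,3}^{3,1}$ expression.
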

\begin{proof}
The proof follows along the same lines as the proof in \cite{9195523}, with the exception that the channel model considered in this paper is different.
\end{proof}
Differently from \cite{9195523}, Lemma~\ref{lemma:ErgodicRate} can be applied to all phase shift designs, which include the short-term and the long-term phase shifts designs of interest in this paper. 
The analytical expressions for $k$ and $w$ that correspond to the latter two phase designs are the same as for Theorem~\ref{Theorem:CovProbRan}.

\vspace*{-0.2cm}
\section{Numerical Results}
\vspace*{-0.2cm}
In this section, we validate the obtained analytical frameworks with the aid of Monte Carlo simulations.
We consider an RIS-assisted wireless network 
where the source is located at the origin and the center-point of the RIS is located at $(27, 25, 25)$~m. 
For the direct link, the channel gain $\beta_{\mathrm{sd}}$ [dB] is $\beta_{\mathrm{sd}} = -33.1 - 3.50 \log_{10}(d_{\mathrm{sd}}/1 \mbox{m})$, where $d_{\mathrm{sd}}$ is the distance between the source and the destination. For the indirect link, the channel gains $\beta_\alpha$ [dB] are $\beta_\alpha = -25.5 - 2.4 \log_{10}(d_\alpha / 1 \mbox{m})$, where $d_\alpha$ is the distance between the transmitter and the receiver. 
The Rician factors are equal to $K_{\alpha} = 10^{1.3 - 0.003 d_\alpha}$. The transmit power is $20$~mW and the system bandwidth is $10$~MHz. The carrier frequency is $1.8$~GHz and the noise power is $-94$~dBm. The following phase shift designs are considered for comparison: $i)$ the short-term phase shift design in \eqref{eq:Phase1}; $ii)$ the long-term phase shift design in \eqref{eq:PhaseLoS}; $iii)$ the equal phase shift design where the phase shifts are all set equal to $\pi/4$; and $iv)$ the random phase shift design where arbitrary values in the range $[-\pi, \pi]$ are considered.

In Fig.~\ref{Fig1}(a), we compared the closed-form expression of the coverage probability against Monte Carlo simulations. The good overlap between the analytical results and the numerical simulations confirms the accuracy of \eqref{eq:Pcovlt}. 
In Fig.~\ref{Fig1}(b), we utilize the analytical framework in \eqref{eq:Pcovlt} to evaluate the coverage probability as a function of the number of RIS elements and for different designs of the phase shifts. 
The phase shift designs based on short-term and long-term CSI offer significant gains compared to the equal and random phase shift designs. In addition, the gap between the short-term and long-term phase shifts design reduces as the number of RIS elements increases.
Finally, Fig.~\ref{Fig1}(c) displays the ergodic rate [b/s/Hz] in \eqref{eq:ErgodicRate}. 
We evince that the deployment of an RIS results in a substantial increase of the ergodic rate, as opposed to surfaces that operate as random scatterers and are not smart and reconfigurable.
Notably, the long-term phase shift design provides an ergodic rate that is close to the short-term phase shift design, and approaches it if the number of RIS elements is sufficiently large.

\vspace*{-0.2cm}
\section{Conclusion}
\vspace*{-0.2cm}
This paper has investigated the coverage probability and the ergodic rate of an RIS-assisted link for different phase shift designs depending on the amount of channel information that is exploited for optimizing the RIS. 
It is shown that a long-term phase shift design that depend on long-term CSI offers a suboptimal solution with a performance loss compared with the optimal phase shift design based on perfect CSI that decreases as the number of RIS elements increases. 
Generalization of this research work includes the analysis of sources and destinations equipped with multiple antennas. 

\vfill\pagebreak
\bibliographystyle{IEEEtran}
\bibliography{IEEEabrv,refs}

\begin{thebibliography}{10}
\providecommand{\url}[1]{#1}
\csname url@samestyle\endcsname
\providecommand{\newblock}{\relax}
\providecommand{\bibinfo}[2]{#2}
\providecommand{\BIBentrySTDinterwordspacing}{\spaceskip=0pt\relax}
\providecommand{\BIBentryALTinterwordstretchfactor}{4}
\providecommand{\BIBentryALTinterwordspacing}{\spaceskip=\fontdimen2\font plus
\BIBentryALTinterwordstretchfactor\fontdimen3\font minus
  \fontdimen4\font\relax}
\providecommand{\BIBforeignlanguage}[2]{{%
\expandafter\ifx\csname l@#1\endcsname\relax
\typeout{** WARNING: IEEEtran.bst: No hyphenation pattern has been}%
\typeout{** loaded for the language `#1'. Using the pattern for}%
\typeout{** the default language instead.}%
\else
\language=\csname l@#1\endcsname
\fi
#2}}
\providecommand{\BIBdecl}{\relax}
\BIBdecl

\bibitem{giordani2020toward}
M.~Giordani, M.~Polese, M.~Mezzavilla, S.~Rangan, and M.~Zorzi, ``Toward {6G}
  networks: {U}se cases and technologies,'' \emph{{IEEE} Commun. Mag.},
  vol.~58, no.~3, pp. 55--61, 2020.

\bibitem{qian2020beamforming}
X.~Qian, M.~Di~Renzo, J.~Liu, A.~Kammoun, and M.-S. Alouini, ``Beamforming
  through reconfigurable intelligent surfaces in single-user {MIMO} systems:
  {SNR} distribution and scaling laws in the presence of channel fading and
  phase noise,'' \emph{{IEEE} Wireless Commun. Lett.}, vol.~10, no.~1, pp.
  77--81, 2020.

\bibitem{wu2019intelligent}
Q.~Wu and R.~Zhang, ``Intelligent reflecting surface enhanced wireless network
  via joint active and passive beamforming,'' \emph{{IEEE} Trans. Wireless
  Commun.}, vol.~18, no.~11, pp. 5394--5409, 2019.

\bibitem{abrardo2020intelligent}
A.~Abrardo, D.~Dardari, and M.~Di~Renzo, ``Intelligent reflecting surfaces:
  {S}um-rate optimization based on statistical position information,''
  \emph{{IEEE} Trans. Commun.}, 2021, early Access.

\bibitem{Chien2021TWC}
T.~V. Chien, H.~Q. Ngo, S.~Chatzinotas, M.~Di~Renzo, and B.~Otternsten,
  ``Reconfigurable intelligent surface-assisted {C}ell-{F}ree {M}assive {MIMO}
  systems over spatially-correlated channels,'' \emph{{IEEE} Trans. Wireless
  Commun.}, 2021, submitted for publication.

\bibitem{zhi2021two}
K.~Zhi, C.~Pan, H.~Ren, K.~Wang, M.~Elkashlan, M.~Di~Renzo, R.~Schober, H.~V.
  Poor, J.~Wang, and L.~Han, ``Two-timescale design for reconfigurable
  intelligent surface-aided {M}assive {MIMO} systems with imperfect {CSI},''
  \emph{arXiv preprint arXiv:2108.07622}, 2021.

\bibitem{zhang2020sum}
Y.~Zhang, C.~Zhong, Z.~Zhang, and W.~Lu, ``Sum rate optimization for two way
  communications with intelligent reflecting surface,'' \emph{{IEEE} Commun.
  Lett.}, vol.~24, no.~5, pp. 1090--1094, 2020.

\bibitem{zappone2020overhead}
A.~Zappone, M.~Di~Renzo, F.~Shams, X.~Qian, and M.~Debbah, ``Overhead-aware
  design of reconfigurable intelligent surfaces in smart radio environments,''
  \emph{{IEEE} Trans. Wireless Commun.}, vol.~20, no.~1, pp. 126--141, 2020.

\bibitem{jung2020performance}
M.~Jung, W.~Saad, Y.~Jang, G.~Kong, and S.~Choi, ``Performance analysis of
  large intelligent surfaces ({LIS}s): {A}symptotic data rate and channel
  hardening effects,'' \emph{{IEEE} Trans. Wireless Commun.}, vol.~19, no.~3,
  pp. 2052--2065, 2020.

\bibitem{9140329}
M.~{Di Renzo}, A.~{Zappone}, M.~{Debbah}, M.~S. {Alouini}, C.~{Yuen}, J.~{de
  Rosny}, and S.~{Tretyakov}, ``Smart radio environments empowered by
  reconfigurable intelligent surfaces: {H}ow it works, state of research, and
  the road ahead,'' \emph{{IEEE} J. Sel. Areas Commun.}, vol.~38, no.~11, pp.
  2450--2525, 2020.

\bibitem{massivemimobook}
E.~Bj\"{o}rnson, J.~Hoydis, and L.~Sanguinetti, ``Massive {MIMO} networks:
  {Spectral}, energy, and hardware efficiency,'' \emph{Foundations and
  Trends{\textregistered} in Signal Processing}, vol.~11, no. 3-4, pp.
  154--655, 2017.

\bibitem{van2021outage}
T.~Van~Chien, A.~K. Papazafeiropoulos, L.~T. Tu, R.~Chopra, S.~Chatzinotas, and
  B.~Ottersten, ``Outage probability analysis of {IRS}-assisted systems under
  spatially correlated channels,'' \emph{{IEEE} Wireless Commun. Lett.}, 2021.

\bibitem{9195523}
T.~{Van Chien}, L.~T. {Tu}, S.~{Chatzinotas}, and B.~{Ottersten}, ``Coverage
  probability and ergodic capacity of intelligent reflecting surface-enhanced
  communication systems,'' \emph{{IEEE} Commun. Lett.}, vol.~25, no.~1, pp.
  69--73, 2021.

\end{thebibliography}

\end{document}